\newtheorem{theorem}{\bf Theorem}[section]
\newtheorem{proposition}{\bf Proposition}[section]
\newtheorem{lemma}{\bf Lemma}[section]
\newtheorem{corollary}{\bf Corollary}[section]
\begin{document}
\title{A Bit of Information Theory, and the Data Augmentation Algorithm Converges}

\author{Yaming~Yu, {\it Member, IEEE}
\thanks{Yaming Yu is with the Department of Statistics, University of California, Irvine, CA, 92697-1250, USA
(e-mail: yamingy@uci.edu).  This work is supported in part by a start-up fund from the Bren School of Information and 
Computer Sciences at UC Irvine.}
}

\maketitle

\begin{abstract}
The data augmentation (DA) algorithm is a simple and powerful tool in statistical computing.  In this note basic 
information theory is used to prove a nontrivial convergence theorem for the DA algorithm. 
\end{abstract}

\begin{keywords}
Gibbs sampling, information geometry, I-projection, Kullback-Leibler divergence, Markov chain Monte Carlo, Pinsker's 
inequality, relative entropy, reverse I-projection, total variation
\end{keywords}

\section{Background}
In many statistical problems we would like to sample from a probability density $\pi(x, y)$, e.g., the joint 
posterior of all parameters and latent variables in a Bayesian model.  When $\pi(x, y)$ is complicated, direct 
simulation may be impractical; however, if the conditional densities $\pi_{X|Y}(x|y)$ and $\pi_{Y|X}(y|x)$ are 
tractable, the following algorithm is an intuitively appealing alternative.  Draw $(X, Y)$ from an initial density 
$p^{(0)}(x, y)$, and then alternatingly replace $X$ by a conditional 
draw given $Y$ according to $\pi_{X|Y}(x|y)$, and $Y$ by a conditional draw given $X$ according to $\pi_{Y|X}(y|x)$; this is a 
crude description of the {\it data augmentation} (DA) algorithm of Tanner and Wong \cite{TW} (see also \cite{MV99}, 
\cite{VM01} and \cite{YM}), a powerful and widely used method in statistical computing.

It is not immediately obvious that iterates of the DA algorithm should approach the target $\pi(x, y)$.  To show convergence, 
one usually appeals to Markov chain theory (Tierney \cite{Tier}), which says that (roughly) if a Markov chain is irreducible 
and 
aperiodic, and possesses a stationary distribution, then it converges to that distribution.  Such results are often stated in 
terms of the {\it total variation distance}, defined for two densities $p$ and $q$ as 
$$V(p, q)=\int |p-q|.$$
Because iterates of the DA algorithm form a Markov chain, they converge in total variation under some regularity conditions.

Total variation, of course, is not the only discrepancy measure.  There is actually another discrepancy measure that is 
natural for the problem, yet rarely explored.  Recall that the 
{\it relative entropy}, or Kullback-Leibler divergence, between two densities $p$ and $q$ is defined as
$$D(p|q)=\int p\log(p/q).$$
It is related to $V(p, q)$ via the well-known Pinsker's inequality
$$D(p|q)\geq \frac{1}{2}V^2(p, q),$$
so that for a sequence of densities $p_t,\ t=0, 1, \cdots$, $\lim_{t\rightarrow\infty} D\left(p_t|p_\infty\right)=0$ implies 
$\lim_{t\rightarrow\infty} V\left(p_t, p_\infty\right)=0$.  Other useful properties of relative entropy can be found in Cover 
and Thomas \cite{Cover}.

It is the purpose of this note to analyze the DA algorithm in terms of relative entropy and present a short proof of a 
convergence result (Theorem \ref{main}) using simple information theoretic techniques.

\section{Main Result}
Let $\mu\times \nu$ be a product measure on a joint measurable space $(\mathcal{X}\times \mathcal{Y}, \mathcal{F}\times 
\mathcal{G})$.  Suppose the target density $\pi(x, y)$ with respect to $\mu\times \nu$ satisfies $\pi(x, y)>0$ for all $(x, 
y)\in \mathcal{X}\times \mathcal{Y}$ (in statistical applications often $\mathcal{X}$ and $\mathcal{Y}$ are subsets of 
Euclidean spaces and each of $\mu$ and $\nu$ is either Lebesgue measure or the counting measure).  Formally, 
given an initial density $p^{(0)}(x, y)$, the DA algorithm generates a sequence of densities $p^{(t)}(x, y),\ t\geq 
0$, where ($p^{(t)}_X(x)=\int_{\mathcal{Y}} p^{(t)}(x, y)\, {\rm d}\nu(y)$, for example)
\begin{equation}
\label{def1}
p^{(t+1)}(x, y)=\left\{\begin{array}{ll} p^{(t)}_X(x)\pi_{Y|X}(y|x),& t\ {\rm odd};\\
p^{(t)}_Y(y)\pi_{X|Y}(x|y),& t\ {\rm even}.\end{array}\right.
\end{equation}

\begin{theorem}
\label{main}
If $\pi(x, y)>0$ for all $(x, y)\in \mathcal{X}\times \mathcal{Y}$, and $D\left(p^{(0)}|\pi\right)<\infty$, then iterates of 
the DA algorithm (\ref{def1}) converge in relative entropy, i.e.,
$$\lim_{t\rightarrow\infty}D\left(p^{(t)}|\pi\right)=0,$$
and $\lim_{t\rightarrow\infty}V\left(p^{(t)},\pi\right)=0$ necessarily.
\end{theorem}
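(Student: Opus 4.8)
The plan is to recognize each half-step of (\ref{def1}) as a \emph{reverse I-projection} onto a family of densities with a prescribed conditional, and to exploit the exact Pythagorean decomposition of relative entropy that such projections enjoy. Write $\mathcal{A}=\{q:\ q_{X|Y}=\pi_{X|Y}\}$ and $\mathcal{B}=\{q:\ q_{Y|X}=\pi_{Y|X}\}$; both contain $\pi$. For $t$ even the update produces $p^{(t+1)}=p^{(t)}_Y\,\pi_{X|Y}\in\mathcal{A}$, keeping the $Y$-marginal and installing the target conditional. First I would verify, by the chain rule for relative entropy (Cover and Thomas \cite{Cover}), that for every $q\in\mathcal{A}$ one has $$D\big(p^{(t)}|q\big)=D\big(p^{(t)}|p^{(t+1)}\big)+D\big(p^{(t+1)}|q\big),$$ where $D\big(p^{(t)}|p^{(t+1)}\big)$ is simply the $p^{(t)}_Y$-average of the conditional divergences $D\big(p^{(t)}_{X|Y}(\cdot|y)\,|\,\pi_{X|Y}(\cdot|y)\big)$ and is therefore nonnegative. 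The analogous identity holds for $t$ odd, with $\mathcal{A}$, $\pi_{X|Y}$, $\nu$ replaced by $\mathcal{B}$, $\pi_{Y|X}$, $\mu$. Taking $q=\pi$ (legitimate since $\pi\in\mathcal{A}\cap\mathcal{B}$) gives the one-step relation $D\big(p^{(t)}|\pi\big)=D\big(p^{(t+1)}|\pi\big)+D\big(p^{(t)}|p^{(t+1)}\big)$, so that $a_t:=D\big(p^{(t)}|\pi\big)$ is nonincreasing.

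Summing this relation telescopes: for every $T$, $$a_0=a_T+\sum_{t=0}^{T-1}D\big(p^{(t)}|p^{(t+1)}\big).$$ Since $a_0<\infty$ and $a_T\geq 0$, the nonnegative series converges; in particular $D\big(p^{(t)}|p^{(t+1)}\big)\to 0$, and by Pinsker's inequality $V\big(p^{(t)},p^{(t+1)}\big)\to 0$. Being monotone and bounded below, $a_t$ decreases to a limit $L\geq 0$, and the whole problem reduces to showing $L=0$.

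The crux is pinning down $L=0$, and this is where $\pi>0$ must do real work. I would first note $\mathcal{A}\cap\mathcal{B}=\{\pi\}$: if $q_{X|Y}=\pi_{X|Y}$ and $q_{Y|X}=\pi_{Y|X}$, then $q_Y(y)/q_X(x)=\pi_{Y|X}(y|x)/\pi_{X|Y}(x|y)=\pi_Y(y)/\pi_X(x)$, so $q_Y/\pi_Y\equiv q_X/\pi_X$ is constant and hence $q=\pi$ (positivity guarantees the ratios are everywhere defined). The plan is then to extract a subsequential limit $p^\ast$ of $\{p^{(t)}\}$: the sublevel set $\{q:\ D(q|\pi)\le a_0\}$ is relatively compact in the setwise topology, because $\sup_t D\big(p^{(t)}|\pi\big)<\infty$ forces uniform integrability of the densities $p^{(t)}/\pi$ with respect to $\pi$. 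Since consecutive iterates satisfy $V\big(p^{(t)},p^{(t+1)}\big)\to 0$ and $\mathcal{A},\mathcal{B}$ are closed under such limits, any subsequential limits along even and along odd times coincide and lie in $\mathcal{A}\cap\mathcal{B}=\{\pi\}$, giving $p^{(t)}\to\pi$ along a subsequence.

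The genuine obstacle is the last upgrade, from convergence of the iterates to $\pi$ back to $\lim_t a_t=0$. Relative entropy is only lower semicontinuous, so the subsequential limit alone yields merely $0=D(\pi|\pi)\le L$, which is the wrong direction; one cannot in general deduce $L=0$ from weak or total-variation convergence, since relative entropy can persist in vanishing tails. To close this gap I would use the specific structure: at even times $p^{(t)}\in\mathcal{B}$, so $a_t=D\big(p^{(t)}_X|\pi_X\big)$ is a \emph{marginal} divergence, and the two-step map on $X$-marginals is a Markov transition with the strictly positive, $\pi_X$-reversible kernel $\tilde K(x,x')=\int_{\mathcal{Y}}\pi_{Y|X}(y|x)\,\pi_{X|Y}(x'|y)\,{\rm d}\nu(y)$ produced by $\pi>0$. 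This kernel smooths the marginals and precludes the escape of relative entropy into thin tails; invoking the Markov-chain convergence theory recalled in the Background (Tierney \cite{Tier}) together with this regularization is what I expect to force the monotone limit $L$ down to $0$. Once $L=0$ is established, $\lim_t V\big(p^{(t)},\pi\big)=0$ follows at once from Pinsker's inequality.
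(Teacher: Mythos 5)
Your first two stages are sound and track the paper closely: the Pythagorean identity $D\bigl(p^{(t)}|\pi\bigr)=D\bigl(p^{(t)}|p^{(t+1)}\bigr)+D\bigl(p^{(t+1)}|\pi\bigr)$ is exactly the paper's Lemma \ref{tri}, and your identification $\mathcal{A}\cap\mathcal{B}=\{\pi\}$ via $q_Y/\pi_Y\equiv q_X/\pi_X$ is precisely the paper's Proposition (this is where $\pi>0$ does its work, as you say). Your route to a subsequential limit is different in detail --- the paper proves the stronger two-index bound $D\bigl(p^{(t)}|p^{(t+n)}\bigr)\leq D\bigl(p^{(t)}|\pi\bigr)-D\bigl(p^{(t+n)}|\pi\bigr)$ (Lemma \ref{tri2}, by induction alternating a data-processing inequality at even steps with the Pythagorean identity at odd steps), which via Pinsker makes $p^{(t)}$ Cauchy in $L_1$, whereas you only control consecutive differences and compensate with Dunford--Pettis weak compactness from the entropy bound; that substitution is workable for extracting a limit in $\mathcal{A}\cap\mathcal{B}$.

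The genuine gap is your final step. You correctly diagnose the obstruction --- lower semicontinuity of $D$ applied to $p^{(t)}\to\pi$ gives only $0\leq L$, and total-variation convergence does not imply entropy convergence --- but your proposed fix does not close it: invoking Tierney-style Markov chain theory for the positive two-step kernel $\tilde K$ yields total-variation convergence of the marginals, which is exactly the kind of convergence you just noted is insufficient, and the claim that positivity of $\tilde K$ ``precludes the escape of relative entropy into thin tails'' is asserted, not proved (making it rigorous is the Fritz/Barron/Harremo\"{e}s--Holst route that the paper's Remarks explicitly flag as a less elementary alternative, and it requires real additional argument, not just positivity). The paper's elementary resolution is the idea you are missing: apply lower semicontinuity in the \emph{second} argument of $D$, with the first argument frozen at a fixed iterate. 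Since $p^{(t+n)}\to\pi$ in $L_1$ as $n\to\infty$, Fatou gives $\liminf_{n}D\bigl(p^{(t)}|p^{(t+n)}\bigr)\geq D\bigl(p^{(t)}|\pi\bigr)$, while Lemma \ref{tri2} gives the upper bound $D\bigl(p^{(t)}|p^{(t+n)}\bigr)\leq D\bigl(p^{(t)}|\pi\bigr)-D\bigl(p^{(t+n)}|\pi\bigr)$, whence $D\bigl(p^{(t)}|\pi\bigr)\leq D\bigl(p^{(t)}|\pi\bigr)-L$ and $L=0$. Note that this endgame is unavailable to you as matters stand, because your telescoping argument controls only $D\bigl(p^{(t)}|p^{(t+1)}\bigr)$, not $D\bigl(p^{(t)}|p^{(t+n)}\bigr)$ for large $n$; the inter-iterate bound of Lemma \ref{tri2} is the one ingredient your outline lacks and the one that makes the proof close elementarily.
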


The condition $\pi(x, y)>0,\ (x, y)\in \mathcal{X}\times \mathcal{Y},$ can be weakened, and the result can be 
generalized to the Gibbs sampler (\cite{GG} \cite{GS}); see Yu \cite{Y}.  Note that the conditions of Theorem 
\ref{main} are 
already weaker than those of Schervish and Carlin \cite{SC}, for example (see also Liu et al.\ \cite{LWK95}), although Theorem 
\ref{main} does not give a qualitative rate of convergence.  As a general comment, the 
approach taken here complements the more traditional $L_2$ approach (Amit \cite{A}) that studies the Gibbs sampler in the 
Hilbert space of square integrable functions. 

Section III provides a short, self-contained proof of Theorem \ref{main}. The main tools (Lemmas 
\ref{tri}--\ref{tri2}) exploit the information geometry of the DA algorithm.  Although relative entropy does not define a 
metric, 
it behaves like squared Euclidean distance.  See Csisz\'{a}r \cite{C75}, Csisz\'{a}r and Shields \cite{CS}, and Csisz\'{a}r 
and Mat\'{u}s \cite{C03} for the notions of 
{\it I-projection} and {\it reverse I-projection} that explore such properties in broader contexts. 

\section{Proof of Theorem \ref{main}}
In this section let $p^{(t)}$ be a sequence of densities generated according to (\ref{def1}) with 
$D\left(p^{(0)}|\pi\right)<\infty$.  Lemma \ref{tri} captures the intuition that each iteration is a projection (more 
precisely, a reverse I-projection) onto the set of densities with a given conditional.  The proof is simple and hence 
omitted.

\begin{lemma}
\label{tri}
For all $t\geq 0$,
$$D\left(p^{(t)}|\pi\right)=D\left(p^{(t)}|p^{(t+1)}\right)+D\left(p^{(t+1)}|\pi\right).$$
\end{lemma}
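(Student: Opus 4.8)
The plan is to exploit the multiplicative structure of the DA update, treating the odd case of (\ref{def1}), where $p^{(t+1)}(x,y)=p^{(t)}_X(x)\,\pi_{Y|X}(y|x)$; the even case follows by interchanging the roles of $X$ and $Y$. First I would isolate two facts. (i) The update leaves the $X$-marginal unchanged, since $\int p^{(t+1)}(x,y)\,{\rm d}\nu(y)=p^{(t)}_X(x)\int \pi_{Y|X}(y|x)\,{\rm d}\nu(y)=p^{(t)}_X(x)$, so that $p^{(t+1)}_X=p^{(t)}_X$. (ii) Writing each density as marginal times conditional, the ratio $p^{(t+1)}/\pi=p^{(t)}_X/\pi_X$ depends on $x$ alone, because the common factor $\pi_{Y|X}(y|x)$ cancels.

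Next I would write the elementary pointwise identity $\log(p^{(t)}/\pi)=\log(p^{(t)}/p^{(t+1)})+\log(p^{(t+1)}/\pi)$, valid wherever $p^{(t)}>0$ (the hypothesis $\pi>0$ together with $D(p^{(0)}|\pi)<\infty$ furnishes the absolute continuity needed to make these ratios finite almost everywhere). Integrating against $p^{(t)}$, the first term reproduces $D(p^{(t)}|p^{(t+1)})$ by definition. For the second term I would invoke fact (ii): since the integrand depends only on $x$, integrating out $y$ replaces $p^{(t)}$ by its $X$-marginal, giving $\int p^{(t)}_X\log(p^{(t)}_X/\pi_X)\,{\rm d}\mu=D(p^{(t)}_X|\pi_X)$. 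By fact (i) the same computation applied to $p^{(t+1)}$ yields $D(p^{(t+1)}|\pi)=\int p^{(t+1)}_X\log(p^{(t)}_X/\pi_X)\,{\rm d}\mu=D(p^{(t)}_X|\pi_X)$. Hence the second term equals $D(p^{(t+1)}|\pi)$, and summing gives the claim.

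The only real care required is in the integration step: the pointwise log identity splits into an integral identity only if the pieces are well defined, and the clean way to see this is to recognize each piece as a genuine nonnegative relative entropy, namely $D(p^{(t)}|p^{(t+1)})$ and $D(p^{(t)}_X|\pi_X)$ respectively, so that no cancellation of $+\infty$ against $-\infty$ can occur. I expect this bookkeeping, rather than any analytic difficulty, to be the main obstacle; once the cross term is recognized as a marginal relative entropy that is invariant under the update, the Pythagorean identity falls out. An equivalent and perhaps shorter route is to apply the chain rule for relative entropy to both $D(p^{(t)}|\pi)$ and $D(p^{(t)}|p^{(t+1)})$: the conditional part of the former equals $D(p^{(t)}|p^{(t+1)})$ because $p^{(t+1)}_{Y|X}=\pi_{Y|X}$ and the shared $X$-marginal kills the marginal part of the latter, while the remaining marginal part $D(p^{(t)}_X|\pi_X)$ equals $D(p^{(t+1)}|\pi)$ by facts (i)--(ii).
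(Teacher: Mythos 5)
Your proof is correct, and since the paper explicitly omits the proof of Lemma \ref{tri} as ``simple,'' your argument is precisely the intended one: the update (\ref{def1}) preserves the relevant marginal ($p^{(t+1)}_X=p^{(t)}_X$) and replaces the conditional by $\pi_{Y|X}$, so the chain-rule decomposition of relative entropy makes the cross term $\int p^{(t)}\log\left(p^{(t+1)}/\pi\right)$ collapse to the marginal divergence $D\left(p^{(t)}_X|\pi_X\right)=D\left(p^{(t+1)}|\pi\right)$. Your care about splitting the integral---noting each piece is a genuine relative entropy with integrable negative part, so no $\infty-\infty$ cancellation arises---is exactly the right bookkeeping for this ``Pythagorean'' (reverse I-projection) identity.
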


According to Lemma \ref{tri}, $D\left(p^{(t)}|\pi\right)$ can only decrease in $t$ (this holds for Markov chains in general).  
However, it does not imply $D\left(p^{(t)}|\pi\right)\downarrow 0$.  To prove the theorem we need further analysis.

\begin{lemma}
Let $t\geq 1$ and $n\geq 1$.  If $n$ is even then
\begin{equation}
\label{ineq0}
D\left(p^{(t)}|p^{(t+n)}\right)\leq D\left(p^{(t)}|p^{(t+n-1)}\right);
\end{equation}
if $n$ is odd then
\label{prop}
\begin{equation}
\label{tri1}
D\left(p^{(t)}|p^{(t+n)}\right)=D\left(p^{(t)}|p^{(t+1)}\right)+D\left(p^{(t+1)}|p^{(t+n)}\right).
\end{equation}
\end{lemma}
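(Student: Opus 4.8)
The plan is to reduce both parts to a single algebraic identity for relative entropy, namely $D(p|r)-D(p|q)-D(q|r)=\int(p-q)\log(q/r)$ (which follows from $D(p|r)-D(p|q)=\int p\log(q/r)$ and $D(q|r)=\int q\log(q/r)$), supplemented by bookkeeping of which marginal and which conditional each iterate inherits from (\ref{def1}). The three structural facts I would record first are: consecutive iterates $p^{(s)}$ and $p^{(s+1)}$ share a marginal (the $X$-marginal when $s$ is odd, the $Y$-marginal when $s$ is even); the update at step $s$ forces $p^{(s+1)}$ to carry a target conditional ($\pi_{Y|X}$ when $s$ is odd, $\pi_{X|Y}$ when $s$ is even); and, re-indexing the second fact, every iterate with $t\geq 1$ already carries a target conditional, i.e.\ $p^{(t)}_{X|Y}=\pi_{X|Y}$ for odd $t$ and $p^{(t)}_{Y|X}=\pi_{Y|X}$ for even $t$. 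All divergences in sight are finite because $D(p^{(0)}|\pi)<\infty$ and Lemma \ref{tri} make $D(p^{(t)}|\pi)$ decreasing, so the rearrangements below are legitimate.

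For the odd-$n$ identity (\ref{tri1}) I would apply the algebraic identity with $q=p^{(t+1)}$, $r=p^{(t+n)}$, reducing the claim to showing the cross term $\int(p^{(t)}-p^{(t+1)})\log(p^{(t+1)}/p^{(t+n)})$ vanishes. Since $n$ is odd, $t+n$ and $t+1$ have the same parity, so $p^{(t+1)}$ and $p^{(t+n)}$ carry the same target conditional; that conditional cancels in the ratio and $\log(p^{(t+1)}/p^{(t+n)})$ becomes a function of a single variable ($x$ when $t$ is odd, $y$ when $t$ is even). Integrating against $p^{(t)}-p^{(t+1)}$ then collapses to the difference of the corresponding marginals, which is zero because $p^{(t)}$ and $p^{(t+1)}$ share exactly that marginal. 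Hence the cross term is $0$ and (\ref{tri1}) holds.

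For the even-$n$ inequality (\ref{ineq0}) the odd-case Pythagorean trick is unavailable, so I would compute the one-step change directly: with $s=t+n-1$ (of parity opposite to $t$, as $n$ is even), $D(p^{(t)}|p^{(t+n-1)})-D(p^{(t)}|p^{(t+n)})=\int p^{(t)}\log(p^{(s+1)}/p^{(s)})$, and the single update makes $p^{(s+1)}/p^{(s)}$ the ratio of the target conditional to $p^{(s)}$'s own conditional in the direction fixed at step $s$. Taking $t$ odd as the representative case (so $s$ is even and $X|Y$ is updated), this integral is $\int p^{(t)}\log(\pi_{X|Y}/p^{(s)}_{X|Y})$, which splits, at each $y$, into $-D(p^{(t)}_{X|Y}|\pi_{X|Y})+D(p^{(t)}_{X|Y}|p^{(s)}_{X|Y})$ averaged over $p^{(t)}_Y$. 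Because $t$ is odd, $p^{(t)}_{X|Y}=\pi_{X|Y}$, the first conditional divergence vanishes identically, and only the nonnegative term $D(p^{(t)}_{X|Y}|p^{(s)}_{X|Y})$ remains; the case $t$ even is the mirror image with $x$ and $y$ interchanged. This gives (\ref{ineq0}).

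I expect the main obstacle to be the parity bookkeeping rather than any estimate. The whole argument turns on the coincidence that the conditional updated at step $s=t+n-1$ is precisely the conditional that $p^{(t)}$ already matches to $\pi$, for that is what forces one of the two conditional divergences to drop out in the even case; verifying this alignment in both parity classes of $t$, and noting that it fails at $t=0$ (which is exactly why $t\geq 1$ is assumed), is the crux. Once the structural facts are in place, both parts fall out of the algebraic identity and the nonnegativity of relative entropy.
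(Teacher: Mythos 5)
Your proposal is correct. For the odd-$n$ identity (\ref{tri1}) it coincides with what the paper intends: the paper omits that proof as ``the same as that of Lemma \ref{tri}'', and the omitted computation is exactly your cross-term argument --- split $\log\left(p^{(t)}/p^{(t+n)}\right)$ through $p^{(t+1)}$, note that $p^{(t+1)}$ and $p^{(t+n)}$ share a target conditional so the log-ratio is a function of one variable, and kill the cross term via the shared marginal. For the even-$n$ inequality (\ref{ineq0}) your route is organized differently from the paper's. The paper observes (taking $t$ odd) that $p^{(t)}_{X|Y}=p^{(t+n)}_{X|Y}=\pi_{X|Y}$, so $D\left(p^{(t)}|p^{(t+n)}\right)=D\left(p^{(t)}_Y|p^{(t+n)}_Y\right)=D\left(p^{(t)}_Y|p^{(t+n-1)}_Y\right)$ by (\ref{def1}), and finishes by citing the monotonicity of relative entropy under marginalization. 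You instead compute the one-step decrement exactly, obtaining
$$D\left(p^{(t)}|p^{(t+n-1)}\right)-D\left(p^{(t)}|p^{(t+n)}\right)=\int p^{(t)}_Y(y)\, D\left(\pi_{X|Y}(\cdot|y)\,|\,p^{(t+n-1)}_{X|Y}(\cdot|y)\right)\, {\rm d}\nu(y)\geq 0.$$
By the chain rule, this averaged conditional divergence is precisely the slack in the paper's data-processing step, so the two proofs are two renderings of the same chain-rule fact; yours buys an exact identity quantifying the decrease (potentially useful for rate questions the paper leaves open), while the paper's is shorter because it invokes the marginalization inequality as a known property.

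One repair is needed in your justification of the rearrangements. Your blanket claim that ``all divergences in sight are finite'' because $D\left(p^{(t)}|\pi\right)$ is finite and decreasing only covers $D\left(p^{(t)}|\pi\right)$ and, via Lemma \ref{tri}, $D\left(p^{(t)}|p^{(t+1)}\right)$; finiteness of $D\left(p^{(t)}|p^{(t+n)}\right)$ for $n\geq 2$ is exactly what Lemma \ref{tri2} establishes later, and Lemma \ref{tri2} is proved \emph{from} the present lemma, so appealing to it here would be circular. The fix is routine: replace the subtraction-based identity $D(p|r)-D(p|q)-D(q|r)=\int(p-q)\log(q/r)$ by chain-rule additions, which hold in $[0,\infty]$ with no integrability hypotheses. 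For instance, with $t$ odd and $n$ odd, conditioning on $X$ gives
$$D\left(p^{(t)}|p^{(t+n)}\right)=D\left(p^{(t)}_X|p^{(t+n)}_X\right)+\int p^{(t)}_X(x)\,D\left(p^{(t)}_{Y|X}(\cdot|x)\,|\,\pi_{Y|X}(\cdot|x)\right)\,{\rm d}\mu(x),$$
where the second term equals $D\left(p^{(t)}|p^{(t+1)}\right)$ and the first equals $D\left(p^{(t+1)}|p^{(t+n)}\right)$, yielding (\ref{tri1}) as a sum of non-negative terms; the even case admits the same treatment and delivers your slack identity without ever forming an $\infty-\infty$ expression. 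With that adjustment your argument is complete and at the same level of rigor as the paper's.
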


\begin{proof} To prove (\ref{ineq0}), without loss of generality assume $t$ is odd.  Since $n$ is even, $p^{(t)}$ and 
$p^{(t+n)}$ have the same conditional 
$p^{(t)}_{X|Y}=p^{(t+n)}_{X|Y}=\pi_{X|Y}$, whereas $p_Y^{(t+n)}=p_Y^{(t+n-1)}$ by (\ref{def1}).  We have
\begin{align*}
D\left(p^{(t)}|p^{(t+n)}\right)&=D\left(p_Y^{(t)}|p_Y^{(t+n)}\right)\\
&=D\left(p_Y^{(t)}|p_Y^{(t+n-1)}\right)\\
&\leq D\left(p^{(t)}|p^{(t+n-1)}\right),
\end{align*}
the last inequality being a basic property of relative entropy (Cover and Thomas \cite{Cover}).  The proof of (\ref{tri1}), 
omitted, 
is the same as that of Lemma \ref{tri}. 
\end{proof}

\begin{lemma}
\label{tri2}
For all $t\geq 1$ and $n\geq 0$ we have
\begin{equation}
\label{ineq}
D\left(p^{(t)}|p^{(t+n)}\right)\leq D\left(p^{(t)}|\pi\right)-D\left(p^{(t+n)}|\pi\right).
\end{equation}
\end{lemma}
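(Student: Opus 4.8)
The plan is to prove (\ref{ineq}) by induction on $n$, uniformly in $t\geq 1$, after first reinterpreting the right-hand side. Summing Lemma \ref{tri} over the consecutive indices $t,t+1,\ldots,t+n-1$ telescopes to
$$D\left(p^{(t)}|\pi\right)-D\left(p^{(t+n)}|\pi\right)=\sum_{k=0}^{n-1} D\left(p^{(t+k)}|p^{(t+k+1)}\right),$$
so (\ref{ineq}) is really a \emph{triangle-type inequality}: the direct divergence $D(p^{(t)}|p^{(t+n)})$ must not exceed the sum of the one-step divergences along the chain. The previous lemma supplies exactly the per-step tools needed to assemble this, with the odd and even cases contributing in different ways.

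For the base case $n=0$ both sides vanish, and $n=1$ is Lemma \ref{tri} itself. For the inductive step I would split on the parity of $n$ and in each case reduce the gap from $n$ to $n-1$. If $n$ is odd, I apply (\ref{tri1}) to write $D(p^{(t)}|p^{(t+n)})=D(p^{(t)}|p^{(t+1)})+D(p^{(t+1)}|p^{(t+n)})$. The first summand equals $D(p^{(t)}|\pi)-D(p^{(t+1)}|\pi)$ by Lemma \ref{tri}; the second is a gap of even length $n-1$ starting at $t+1$, hence bounded by $D(p^{(t+1)}|\pi)-D(p^{(t+n)}|\pi)$ via the inductive hypothesis. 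The two $D(p^{(t+1)}|\pi)$ terms cancel, leaving precisely the desired bound $D(p^{(t)}|\pi)-D(p^{(t+n)}|\pi)$.

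If $n$ is even, I instead apply (\ref{ineq0}) to get $D(p^{(t)}|p^{(t+n)})\leq D(p^{(t)}|p^{(t+n-1)})$, and then the inductive hypothesis (gap $n-1$ at start $t$) bounds the latter by $D(p^{(t)}|\pi)-D(p^{(t+n-1)}|\pi)$. The only loose end is the mismatch between the term $D(p^{(t+n-1)}|\pi)$ produced here and the term $D(p^{(t+n)}|\pi)$ wanted in the conclusion. Here Lemma \ref{tri} closes the gap: $D(p^{(t+n-1)}|\pi)=D(p^{(t+n-1)}|p^{(t+n)})+D(p^{(t+n)}|\pi)\geq D(p^{(t+n)}|\pi)$, so discarding the nonnegative divergence only enlarges the upper bound and yields (\ref{ineq}) for $n$.

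The step I expect to require the most care is the even case. Unlike the odd case, where (\ref{tri1}) gives an exact additive decomposition that cancels cleanly, (\ref{ineq0}) delivers only a contraction, so the target term $D(p^{(t+n)}|\pi)$ must be reinstated by invoking the monotonicity of $D(\cdot|\pi)$ that Lemma \ref{tri} guarantees. Once that substitution is in place, both parities reduce the problem from $n$ to $n-1$ and the induction closes.
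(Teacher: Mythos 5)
Your proof is correct and follows essentially the same route as the paper's: induction on $n$ (uniformly in $t$), using (\ref{tri1}) with exact cancellation via Lemma \ref{tri} in the odd case, and (\ref{ineq0}) followed by the monotonicity of $D\left(\cdot\,|\pi\right)$ from Lemma \ref{tri} in the even case. The telescoping reinterpretation you open with is a nice piece of motivation but plays no logical role; the inductive argument itself matches the paper's step for step.
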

\begin{proof} Let us use induction on $n$.  The case $n=0$ is trivial.  Suppose (\ref{ineq}) has 
been verified for all $n'<n$.  When $n$ is even, we apply (\ref{ineq0}), the induction hypothesis, and Lemma \ref{tri} 
to obtain
\begin{align*}
D\left(p^{(t)}|p^{(t+n)}\right)&\leq D\left(p^{(t)}|p^{(t+n-1)}\right)\\
&\leq D\left(p^{(t)}|\pi\right)-D\left(p^{(t+n-1)}|\pi\right)\\
&\leq D\left(p^{(t)}|\pi\right)-D\left(p^{(t+n)}|\pi\right).
\end{align*}
When $n$ is odd, by (\ref{tri1}), the induction hypothesis, and then Lemma \ref{tri}, we have
\begin{align*}
D\left(p^{(t)}|p^{(t+n)}\right)=& D\left(p^{(t)}|p^{(t+1)}\right)+D\left(p^{(t+1)}|p^{(t+n)}\right)\\
\leq &D\left(p^{(t)}|p^{(t+1)}\right)+ D\left(p^{(t+1)}|\pi\right)\\
&-D\left(p^{(t+n)}|\pi\right)\\
= &D\left(p^{(t)}|\pi\right)-D\left(p^{(t+n)}|\pi\right). 
\end{align*}
\end{proof}

\begin{corollary} There exists some density $\pi^*$ such that 
$\lim_{t\rightarrow\infty} V\left(p^{(t)}, \pi^*\right)=0.$
\label{coro}
\end{corollary}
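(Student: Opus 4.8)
The plan is to show that $\{p^{(t)}\}$ is a Cauchy sequence in total variation and then appeal to the completeness of $L^1(\mu\times\nu)$ to produce the limiting density $\pi^*$. As a first step, recall the observation following Lemma \ref{tri} that $D\left(p^{(t)}|\pi\right)$ is nonincreasing in $t$; since it is bounded below by $0$ and finite throughout (because $D\left(p^{(0)}|\pi\right)<\infty$), it converges to some limit $L\geq 0$.

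The Cauchy property then follows at once from Lemma \ref{tri2} combined with Pinsker's inequality. For any $m>t\geq 1$, combining (\ref{ineq}) with Pinsker's inequality and then using $D\left(p^{(m)}|\pi\right)\geq L$ yields
$$V^2\left(p^{(t)}, p^{(m)}\right)\leq 2D\left(p^{(t)}|p^{(m)}\right)\leq 2\left(D\left(p^{(t)}|\pi\right)-L\right).$$
Because the right-hand side does not depend on $m$ and tends to $0$ as $t\to\infty$, we get $\sup_{m>t}V\left(p^{(t)},p^{(m)}\right)\to 0$, which is exactly the Cauchy criterion in the total variation metric.

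Finally, since $V(p,q)=\int|p-q|$ is the $L^1(\mu\times\nu)$ distance and that space is complete, the Cauchy sequence has an $L^1$ limit $\pi^*$ satisfying $V\left(p^{(t)},\pi^*\right)\to 0$. It remains to verify that $\pi^*$ is a genuine density: an $L^1$ limit of nonnegative functions is nonnegative almost everywhere, and $\left|\int\pi^*-\int p^{(t)}\right|\leq V\left(p^{(t)},\pi^*\right)\to 0$ forces $\int\pi^*=1$. I expect the convergence to be the easy part, essentially immediate from the telescoping bound of Lemma \ref{tri2}; the only point needing genuine care is this last check that the $L^1$ limit is a bona fide probability density and not merely an integrable function.
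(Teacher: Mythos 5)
Your proposal is correct and takes essentially the same route as the paper: Pinsker's inequality combined with the telescoping bound of Lemma \ref{tri2} and the monotonicity of $D\left(p^{(t)}|\pi\right)$ gives the Cauchy property in total variation, and completeness of $L_1(\mathcal{X}\times\mathcal{Y})$ supplies the limit $\pi^*$. Your closing verification that $\pi^*$ is a genuine density (nonnegative with unit integral) is a detail the paper leaves implicit, and it is handled correctly.
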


\begin{proof}  Pinsker's inequality and (\ref{ineq}) imply
$$\frac{1}{2}V^2\left(p^{(t)}, p^{(k)}\right)\leq \left|D\left(p^{(t)}|\pi\right)-D\left(p^{(k)}|\pi\right)\right|,$$
for all $t, k\geq 1.$  Because $D\left(p^{(t)}|\pi\right)$ is finite and decreases monotonically in $t$, 
$\lim_{t, k\rightarrow\infty} V\left(p^{(t)}, p^{(k)}\right)=0,$
i.e., $p^{(t)}$ is a Cauchy sequence in $L_1(\mathcal{X}\times \mathcal{Y})$.  Hence $p^{(t)}$ converges in 
$L_1(\mathcal{X}\times\mathcal{Y})$ to some density $\pi^*$.  (Only the completeness of $L_1(\mathcal{X}\times\mathcal{Y})$ is 
used here.  Further properties of $L_p$ spaces can be found in real analysis texts such as Royden \cite{R}.) \end{proof}

\begin{proposition}
In the setting of Corollary \ref{coro}, $\pi^*=\pi$.
\end{proposition}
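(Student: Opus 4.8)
The plan is to show that the $L_1$ limit $\pi^*$ is a fixed point of the DA updates, in the sense that it shares both conditional densities of the target $\pi$, and then to deduce $\pi^*=\pi$ from strict positivity by an elementary separation-of-variables argument. By Corollary \ref{coro} the whole sequence $p^{(t)}$ converges to $\pi^*$ in $L_1$, so the even and odd subsequences do as well, and this convergence passes to the marginals since $V\left(p_Y^{(t)},\pi^*_Y\right)\leq V\left(p^{(t)},\pi^*\right)$ and likewise for the $X$-marginal.

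First I would exploit the explicit form of a single DA step. For even $t$, equation (\ref{def1}) gives $p^{(t+1)}(x,y)=p_Y^{(t)}(y)\pi_{X|Y}(x|y)$. I would show the right-hand side converges in $L_1$ to $\pi^*_Y(y)\pi_{X|Y}(x|y)$: indeed
$$\int\left|p_Y^{(t)}(y)-\pi^*_Y(y)\right|\pi_{X|Y}(x|y)\,{\rm d}\mu(x)\,{\rm d}\nu(y)=V\left(p_Y^{(t)},\pi^*_Y\right)\to 0,$$
because $\pi_{X|Y}(\cdot|y)$ integrates to one in $x$. Since the left-hand side $p^{(t+1)}$ converges in $L_1$ to $\pi^*$, uniqueness of $L_1$ limits yields $\pi^*(x,y)=\pi^*_Y(y)\pi_{X|Y}(x|y)$ for $\mu\times\nu$-almost every $(x,y)$. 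Running the same argument along odd $t$ gives $\pi^*(x,y)=\pi^*_X(x)\pi_{Y|X}(y|x)$ almost everywhere.

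Finally I would combine these two identities with the factorizations $\pi=\pi_Y\pi_{X|Y}=\pi_X\pi_{Y|X}$ of the target. Because $\pi>0$ everywhere (so $\pi_X,\pi_Y>0$ and the conditionals are well defined), dividing gives
$$\frac{\pi^*(x,y)}{\pi(x,y)}=\frac{\pi^*_Y(y)}{\pi_Y(y)}=\frac{\pi^*_X(x)}{\pi_X(x)}$$
almost everywhere. Thus a function of $y$ alone equals a function of $x$ alone on a set of full $\mu\times\nu$ measure; by Fubini each must agree almost everywhere with a single constant $c$, so that $\pi^*=c\,\pi$. Integrating both sides and using that $\pi^*$ and $\pi$ are both densities forces $c=1$, i.e.\ $\pi^*=\pi$.

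The step I expect to be the main obstacle is the measure-theoretic passage to the limit: one must justify rigorously that marginal $L_1$ convergence holds, that the product $p_Y^{(t)}\pi_{X|Y}$ converges in $L_1$, and that the almost-everywhere identity ``a function of $x$ equals a function of $y$'' genuinely forces a constant. This last point is precisely where the product structure of $\mu\times\nu$ (through Fubini) and the strict positivity of $\pi$ enter; everything else is routine once these ingredients are in hand.
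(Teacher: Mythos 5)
Your proof is correct and takes essentially the same route as the paper's: first establish that $\pi^*$ inherits both conditionals of $\pi$, i.e.\ $\pi^*=\pi^*_Y\pi_{X|Y}=\pi^*_X\pi_{Y|X}$ almost everywhere, then use the strict positivity of $\pi$ to divide and integrate out one variable. You in fact supply more detail than the paper at the first step (the paper simply asserts that the conditionals of the $L_1$ limit must match, whereas you verify it via $V\left(p_Y^{(t)},\pi^*_Y\right)\leq V\left(p^{(t)},\pi^*\right)$ and uniqueness of $L_1$ limits), and your separation-of-variables argument yielding $\pi^*=c\,\pi$ with $c=1$ is just a mild rephrasing of the paper's step of integrating $\pi^*_Y(y)=\pi^*_X(x)\,\pi_Y(y)/\pi_X(x)$ over $y$ to get $\pi^*_X=\pi_X$.
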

\begin{proof}
Since $p^{(t)},\ t\geq 1,$ has the conditional $\pi_{X|Y}$ when $t$ is odd, and $\pi_{Y|X}$ when 
$t$ is even, the conditionals of $\pi^*$ must match those of $\pi$, i.e., 
\begin{equation}
\label{pistar}
\pi^*(x,y)=\pi^*_Y(y)\pi_{X|Y}(x|y)=\pi^*_X(x)\pi_{Y|X}(y|x),
\end{equation}
almost everywhere.  Under the assumption $\pi(x, y)>0$, (\ref{pistar}) implies
$$\pi^*_Y(y)=\pi^*_X(x)\frac{\pi_{Y|X}(y|x)}{\pi_{X|Y}(x|y)}=\pi^*_X(x)\frac{\pi_Y(y)}{\pi_X(x)}.$$
Integration over $y$ yields $1=\pi^*_X(x)/\pi_X(x)$, which, together with (\ref{pistar}), proves $\pi^*=\pi$. 
\end{proof}

Finally we finish the proof of Theorem \ref{main} by showing that the convergence in Corollary \ref{coro} also holds in 
relative entropy.
\begin{lemma}
$\lim_{t\rightarrow\infty} D\left(p^{(t)}|\pi\right)=0.$
\end{lemma}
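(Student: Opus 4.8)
The plan is to combine the monotonicity of $D\left(p^{(t)}|\pi\right)$ with a lower-semicontinuity property of relative entropy and the telescoping bound of Lemma \ref{tri2}. By Lemma \ref{tri} the sequence $D\left(p^{(t)}|\pi\right)$ is nonincreasing in $t$ and bounded below by $0$, hence converges to some limit $L \geq 0$; the whole task is to show $L = 0$.

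The one nonstandard ingredient is the lower semicontinuity of relative entropy in its \emph{second} argument under $L_1$ (equivalently, total variation) convergence: whenever $q_n \to q$ in $L_1$, one has $D(p|q) \leq \liminf_{n} D(p|q_n)$. I would either cite this or record it as a short remark; it follows, for instance, from the fact that $D(p|q)$ equals the supremum, over finite measurable partitions, of the corresponding discrete relative entropies, each of which is continuous in $(p,q)$ under total variation, so that $D$ is a supremum of continuous functionals and is therefore lower semicontinuous.

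Granting this, I would fix $t \geq 1$ and let $n \to \infty$. By Corollary \ref{coro} together with the Proposition we have $\pi^* = \pi$, so $p^{(t+n)} \to \pi$ in $L_1$ as $n \to \infty$. Lower semicontinuity applied with the fixed first argument $p^{(t)}$ and second arguments $p^{(t+n)}$ gives
$$D\left(p^{(t)}|\pi\right) \leq \liminf_{n\to\infty} D\left(p^{(t)}|p^{(t+n)}\right),$$
while Lemma \ref{tri2} gives $D\left(p^{(t)}|p^{(t+n)}\right) \leq D\left(p^{(t)}|\pi\right) - D\left(p^{(t+n)}|\pi\right)$ for every $n$. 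Since $D\left(p^{(t+n)}|\pi\right) \to L$, the right-hand side of the last bound tends to $D\left(p^{(t)}|\pi\right) - L$, and combining the two displays yields $D\left(p^{(t)}|\pi\right) \leq D\left(p^{(t)}|\pi\right) - L$. As $D\left(p^{(t)}|\pi\right) < \infty$, this forces $L \leq 0$, hence $L = 0$; the monotone convergence $D\left(p^{(t)}|\pi\right) \downarrow 0$ is precisely the assertion, and $\lim_t V\left(p^{(t)},\pi\right) = 0$ then follows from Pinsker's inequality, completing Theorem \ref{main}.

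The main obstacle is exactly the lower-semicontinuity step, the only point where an analytic fact beyond the algebra of Lemmas \ref{tri}--\ref{tri2} is needed. One must be careful to invoke it in the second slot of $D(\cdot|\cdot)$, letting the index $n$ (not $t$) run to infinity so that $p^{(t+n)} \to \pi$, rather than in the first slot, where the resulting inequality would point the wrong way and give only the vacuous bound $0 \leq L$. The remaining pieces — the monotone limit $L$ supplied by Lemma \ref{tri} and the telescoping bound of Lemma \ref{tri2} — are routine bookkeeping.
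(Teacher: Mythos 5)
Your proof is correct and follows essentially the same route as the paper: both pass to the monotone limit $d=\lim_t D\left(p^{(t)}|\pi\right)$, use Lemma \ref{tri2} to get $\liminf_{n\rightarrow\infty} D\left(p^{(t)}|p^{(t+n)}\right)\leq D\left(p^{(t)}|\pi\right)-d$, and then invoke lower semicontinuity of $D\left(p^{(t)}|\cdot\right)$ along $p^{(t+n)}\rightarrow\pi$ in $L_1$ (using Corollary \ref{coro} and the Proposition) to force $d=0$. The only difference is in how the semicontinuity step is justified -- the paper applies Fatou's Lemma to the nonnegative integrand $p^{(t)}\log\left(p^{(t)}/p^{(t+n)}\right)-p^{(t)}+p^{(t+n)}$ to obtain (\ref{lsc}), while you derive it from the supremum-over-finite-partitions characterization of relative entropy -- but both are standard, and the paper itself notes that (\ref{lsc}) is an instance of the general lower semicontinuity property of Csisz\'{a}r \cite{C75}.
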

\begin{proof}
We already have $D\left(p^{(t)}|\pi\right)\downarrow d$, say, with $d\geq 0$.  Taking $n\rightarrow\infty$ in (\ref{ineq}) we 
get
$$\liminf_{n\rightarrow\infty} D\left(p^{(t)}|p^{(t+n)}\right)\leq D\left(p^{(t)}|\pi\right)-d.$$
On the other hand, since 
$$D\left(p^{(t)}|p^{(t+n)}\right)=\int p^{(t)}\log \left( p^{(t)}/p^{(t+n)}\right) -p^{(t)}+p^{(t+n)}$$ 
and the integrand is non-negative, by Fatou's Lemma we have
\begin{equation}
\label{lsc}
\liminf_{n\rightarrow \infty} D \left(p^{(t)}|p^{(t+n)}\right)\geq D\left(p^{(t)}|\pi\right)
\end{equation}
which forces $d=0$.  The proof is now complete.  Note that (\ref{lsc}) is a case of the more general lower semi-continuity 
property of relative entropy (Csisz\'{a}r \cite{C75}). \end{proof}

\section{Remarks}  
As pointed out by an anonymous reviewer, the core of Section III consists of two parts: (i) showing
$\lim_{t\rightarrow\infty} V\left(p^{(t)}, \pi^*\right)=0$ for some $\pi^*$, whose conditionals match those of $\pi$, and (ii) 
showing that $\pi^*=\pi$.  Part (i) can be phrased more 
generally and is related to the results of Csisz\'{a}r and Shields (\cite{CS}, Theorem 5.1) on alternating I-projections.  It 
is also related to the information theoretic treatment of the EM algorithm (\cite{DLR} \cite{MV97}) of Csisz\'{a}r and Tusnady 
\cite{CT}.  The 
condition $\pi(x, y)>0$, not used in part (i), can be replaced by a weaker assumption, as long as one can show that there 
exists no density other than $\pi$ that possesses the two conditionals $\pi_{X|Y}$ and $\pi_{Y|X}$.

Lemma \ref{tri} appears in Yu \cite{Y}.  Lemmas \ref{prop} and \ref{tri2} are new.  Generalizations of Theorem \ref{main} to 
the Gibbs sampler with more than two components are possible (\cite{Y}), but technically more involved, because Lemmas 
\ref{prop} 
and \ref{tri2} are tailored to the two component case.  The issue of the 
rate of convergence, not addressed here, is definitely worth investigating. 

The DA algorithm has the following feature.  If we let $(X^{(0)}, Y^{(0)}, X^{(1)}, Y^{(1)}, \ldots)$ be the iterates 
generated, i.e., the conditional distribution of $Y^{(k)}|X^{(k)}$ is $\pi_{Y|X}$ and that of $X^{(k+1)}|Y^{(k)}$ is 
$\pi_{X|Y}$, then each of $\{X^{(k)}\}$ and $\{Y^{(k)}\}$ forms a reversible Markov chain.  Fritz \cite{F}, Barron \cite{B}, 
and Harremo\"{e}s and Holst \cite{HH} apply information theory to prove convergence theorems 
for reversible Markov chains.  Their results may be adapted to give an alternative (albeit less elementary) derivation of 
Theorem \ref{main}. 

\section*{Acknowledgments}
The author would like to thank Xiao-Li Meng and David van Dyk for introducing him to the topic of data augmentation.  He is 
also grateful to the Associate Editor and three anonymous reviewers for their valuable comments.



\begin{thebibliography}{10}
\bibitem{A}
Y. Amit, ``On rates of convergence of stochastic relaxation for Gaussian and non-Gaussian distributions,'' 
{\it Journal of Multivariate Analysis}, vol. 38, pp. 82--99, 1991.
\bibitem{B}
A. R. Barron, ``Limits of information, Markov chains, and projections,'' in {\it Proc.
IEEE Int.\ Symp.\ Information Theory}, Sorrento, Italy, Jun. 2000.
\bibitem{Cover}
T. Cover and J. Thomas, {\it Elements of Information Theory}, 2nd ed., New York: Wiley, 2006.
\bibitem{C75}
I. Csisz\'{a}r, ``I-divergence geometry of probability distributions and minimization problems,'' {\it Ann. Probab.}, vol. 3, 
pp. 146–-158, 1975. 
\bibitem{C03}
I. Csisz\'{a}r and F. Mat\'{u}s, ``Information projections revisited,'' {\it IEEE Trans. Inf. Theory}, vol. 49, 
pp. 1474–-1490, 2003. 
\bibitem{CS}
I. Csisz\'{a}r and P. Shields, ``Information theory and statistics: a tutorial,'' {\it Foundations and Trends in 
Communications and Information Theory,} vol. 1, pp. 417--528, 2004. 
\bibitem{CT}
I. Csisz\'{a}r and G. Tusnady, ``Information geometry and alternating minimization procedures,''
{\it Statistics \& Decisions} Supplement Issue 1, pp. 205--237, 1984.
\bibitem{DLR}
A. P. Dempster, N. M. Laird and D. B. Rubin, ``Maximum likelihood estimation from incomplete
data via the EM algorithm'' (with discussion), {\it J. Roy. Statist. Soc. B}, vol. 39, pp. 1--38, 1977.
\bibitem{F}
J. Fritz, ``An information-theoretical proof of limit theorems for reversible Markov processes,'' {\it Trans. Sixth Prague 
Conf. on Inform. Theory, Statist. Decision Functions, Random Processes}, Prague,
Czech Republic, 1973.
\bibitem{GS} 
A. E. Gelfand and A. F. M. Smith, ``Sampling-based approaches to calculating marginal densities,'' {\it J. Amer. Statist. 
Assoc.}, vol. 85, pp. 398--409, 1990.
\bibitem{GG}
S. Geman and D. Geman, ``Stochastic relaxation, Gibbs distributions, and the Bayesian restoration of images,'' {\it IEEE 
Trans. Pattern Analysis and Machine Intelligence}, vol. 6, pp. 721--741, 1984.
\bibitem{HH}
P. Harremo\"{e}s and K. K. Holst, ``Convergence of Markov chains in information divergence,'' In press, {\it Journal of 
Theoretical Probability}, 2008. 
\bibitem{LWK95}
J. Liu, W. H. Wong and A. Kong, ``Correlation structure and convergence rate of the Gibbs sampler for various scans,'' {\it J. 
Roy. Statist. Soc. B}, vol. 57, pp. 157--169, 1995.
\bibitem{MV97}
X. L. Meng and D. A. van Dyk, ``The EM algorithm -- an old folk song sung to a fast new tune'' (with discussion), {\it J. Roy. 
Statist. Soc. B}, vol. 59, pp. 511--567, 1997.
\bibitem{MV99}
X. L. Meng and D. A. van Dyk, ``Seeking efficient data augmentation schemes via conditional and marginal augmentation,'' {\it 
Biometrika}, vol. 86, pp. 301--320, 1999.
\bibitem{R}
H. L. Royden, {\it Real analysis}, 3rd edition, New York: Macmillan, 1988.
\bibitem{SC}
M. J. Schervish and B. P. Carlin, ``On the convergence of successive substitution sampling,'' {\it Journal of 
Computational and Graphical Statistics}, vol. 1, pp. 111--127, 1992.
\bibitem{TW}
M. A. Tanner and W. H. Wong, ``The calculation of posterior distributions by data augmentation,'' {\it J. Amer. Statist. 
Assoc.}, vol. 82, pp. 528--540, 1987.
\bibitem{Tier}
L. Tierney, ``Markov chains for exploring posterior distributions,'' {\it Annals of Statistics}, vol. 22, pp. 1701--1727, 
1994.
\bibitem{VM01}
D. A. van Dyk and X. L. Meng, ``The art of data augmentation'' (with discussion), {\it Journal of Computational and Graphical 
Statistics}, vol. 10, pp. 1--111, 2001.
\bibitem{Y}
Y. Yu, ``Information geometry and the Gibbs sampler,'' {\it Technical Report,} Dept.\ of Statistics, University of California, 
Irvine, 2008.
\bibitem{YM}
Y. Yu and X. L. Meng, ``Espousing classical statistics with modern computation: sufficiency, ancillarity and an interweaving 
generation of MCMC,'' {\it Technical Report}, Dept.\ of Statistics, University of California, Irvine, 2008.
\end{thebibliography}
\end{document}